\documentclass[11pt]{article}
\usepackage{inputenc}
\usepackage{tikz}
\usetikzlibrary{arrows,calc,shapes,decorations.pathreplacing,positioning}
\usepackage{pgfplots}
\usepackage{comment}
\usepackage{amsmath}
\pgfplotsset{compat=newest}
\usepackage{amssymb}
\usepackage[hidelinks]{hyperref}
\usepackage{algorithm}
\usepackage{algpseudocode}

\newcommand{\commentout}[1]{}

\renewcommand{\eqref}[1]{Equation~(\ref{#1})}

\newcounter{remark}[section]

\def\claim{\par\medskip\noindent\refstepcounter{remark}\hbox{\bf Remark \arabic{section}.\arabic{remark}}
	\ 
}
\def\endclaim{
	\par\medskip}
\newenvironment{remark}{\claim}{\endclaim}

\newtheorem{theorem}{Theorem}

\newtheorem{corollary}{Corollary}

\newtheorem{definition}{Definition}
\setcounter{section}{0} 
\pagenumbering{arabic}
\def\endpf{{\ \hfill\hbox{\vrule width1.0ex height1.0ex}\parfillskip 0pt
	}}
	\newenvironment{proof}{\noindent{\bf Proof:}}{\endpf}

	\begin{document}
		
		\title{Perturbation Robust Stable Matching}
		\author{Royi Jacobovic\footnote{Statistics Department, The Hebrew University of Jerusalem.}  \footnote{Comments may be sent to \textit{royi.jacobovic@mail.huji.ac.il}.}}
		\maketitle
		\begin{abstract}
			A well known result states that stability criterion for matchings in two-sided markets doesn't ensure uniqueness. This opens the door for a moral question with regard to the optimal stable matching from a social point of view. Here, a new notion of social optimality is proposed. Its novelty is its ability to take into account the possibility of agents to leave the matching after it has already been established. To formalize this real-life scenario, this work includes a well-defined probability model and a social cost function that maintain the general guidelines of leaving-agents situations. Finally, efficient algorithms to optimize this function are developed either under stability constraint or without it.     
			
		\end{abstract}
	
	\section{Introduction}
	Consider an instance of the classic stable-marriage problem (SMP), i.e. $n$ men and $n$ women who have strict nonnegative cost functions over the members of the opposite sex and themselves. Under these settings Gale and Shapley \cite{GS} proved that there exists at least one stable matching, i.e. a matching for which there is no blocking-pair of man and woman who prefer to be together rather than staying with their current partners. With respect to this result, Gale and Shapley provided an example which demonstrates that more than one stable matching can exist. This led several authors \cite{McVitie,Knuth,Y,Feder} to suggest several social cost functions in order to assess the quality of each stable matching and in particular to pinpoint the optimal ones from a social point of view.
	
	In this part of the work a new social cost function is suggested. In particular, it has been designed to take into account the criterion of robustness against leaving of agents due to exogenous motivations. As a motivating example, consider the case-study of the market for clinical-psychology graduate programs in Israel. In this market there are applicants who have preferences over the programs while each program ranks the applicants with respect to its own criteria. In practice, the matching-planner assigns each applicant to a specific program or to herself if she stays unmatched. Notice that a problem arises if only a short-time before the beginning of the academic year one applicant who has been matched to a graduate program is accepted by a top-leading program abroad. Allegedly, the matching-planner would be glad to match once again the applicants and programs that remained in the pool. The problem is that now both programs and applicants are obliged to their original matchings by commitments like jobs, rent contracts, thesis advisors, budget considerations and etc. Observe that these commitments are both expensive to be broken as well as hard to be re-established immediately in a new place, therefore high chance for changes in the original matching would decrease the market efficiency due to an increasing uncertainty. 
	
	With regard to this problem the rest is organized as follows: First of all, a probabilistic model which captures the possibility of agents to leave is introduced. Then, with respect to this model, the notion of perturbation-robust stable matching is defined. More specifically, considering the agents who stayed in the model, this notion is a stable-matching which controls the trade-off between minimizations of the expected social cost  and the expected distance to the new optimal matching. Section 2 includes an example which demonstrates the problem and the necessity to develop an efficient algorithm to solve it. Section 3 presents the details of a possible $\mathcal{O}[n^4\log(n)]$-time algorithm while section 4 describes how the same problem can be solved by a $\mathcal{O}(n^4)$-time algorithm where stability axiom is not required from the solution. Finally, section 5 is a conclusion which includes some optional directions for further research.    
	
	\subsection{Mathematical Model}
	
	For simplicity of notation and w.l.o.g consider the case when no more than a single agent may leave the model. Rigorously, denote the set of agents in this model by $\Omega$ and draw an agent $\omega_0$ from $\Omega\cup\phi$ w.r.t. some probability measure $p(\cdot)$ assigning a probability $p_\phi\geq0$ for the event that no one leaves the model and $p_{\omega_0}\geq0\ \ s.t. \ \ p_\phi+\sum_{\omega_0\in\Omega}p_{\omega_0}=1$ for the event that an agent $\omega_0\in\Omega$ is the only one to leave the model. Given this framework, for any $\omega_0\in\Omega$ who leaves the model define $\mu_{\omega_0}$ as the optimal matching that is returned by Feder's algorithm \cite{Feder} when its input is the stable marriage instance that doesn't include $\omega_0$. Similarly denote by $c_{\omega_0}(\mu)$ the vector of costs of all agents except $\omega_0$ under the matching $\mu$. Notice that when $\omega_0=\phi$ the notation $c_{\omega_0}(\mu)$ refers to the vector of all the costs of the agents while $\mu_{\omega_0}$ stands for the optimal matching associated with the original SMP instance. Now formally, with these notations in hands, make the following definition:
	
	\begin{definition}
		A matching $\mu^*$ is  a $\nu$-perturbation robust stable-matching iff it solves the following optimization over the set of stable matchings:
		
		\begin{equation}
		\mu^*\in\arg\min_\mu\Big\{\nu\mathbb{E}_p||c_{\omega_0}(\mu)||^2+(1-\nu)\mathbb{E}_p||c_{\omega_0}(\mu)-c_{\omega_0}(\mu_{\omega_0})||^2\Big\}
		\end{equation}
		
		where the norm is Euclidean, $\nu\in[0,1]$ is a known parameter and $\mu_{\omega_0}$ is the output of Feder's algorithm \cite{Feder} such that it minimizes the social cost function
		
		\begin{equation}
		\mu_{\omega_0}\in\arg\min_{\mu}||c_{\omega_0}(\mu)||^2
		\end{equation}
		
		over the set of stable matchings associated with the original problem instance minus $\omega_0$. 
		
	\end{definition}
	
	For clarity, denote the cost function which appears in equation (1) by $\psi(\cdot;\nu)$. Then, to give some intuition about its structure, notice that the first term stands for the social cost associated with $\mu$ while the second is a proxy for the social regret. Therefore, the interpretation of $\nu$ is of a parameter which reflects the preferences of the matching-planner over the trade-off between the social-cost and the risk of exceeding social demand for re-matching. Finally, notice that both terms share the same scale and determined with respect to the same norm, that is to make them comparable and let $\nu$ to be associated with clear interpretation.   
	
	\begin{remark}
		In Definition 1, $\mu_{\omega_0}$ has been specified as the output of the algorithm of Feder \cite{Feder}. While this specification fully discriminates $\mu_{\omega_0}$ from all other stable matchings that minimize the cost function mentioned in equation (2), it is not the most natural specification that can be considered. A more plausible definition for $\mu_{\omega_0}$ is made as follows: Consider the set of stable matchings which  minimize the cost function mentioned in equation (2). With regard to this set and given some matching $\mu$, let $\mu_{\omega_0}$ be the first member of this set (assume some alphabetic order over this set)  such that the squared norm $||c_{\omega_0}(\mu)-c_{\omega_0}(\mu_{\omega_0})||^2$ is minimized. Notice that while this definition seems to be making sense, it is not clear whether it defines $\mu_{\omega_0}$ that can be computed efficiently. Meanwhile until this computability question  is answered, observe that all of the next results can be easily modified to the case where  efficient solution exists for this problem.   
		
	\end{remark}
	
	\section{Motivating Example}
	This section presents an example which is based on the SMP instance which was introduced by Gale and Shapley \cite{GS}. Mainly, it demonstrates two points: First, that there exists a trade-off between minimizing the first and second terms of $\psi(\cdot;\nu)$. Secondly, that the optimization described in the previous section can't be solved systematically by existing algorithms. Namely, this example is a case where for one parametrization of $\nu$ the solution can be derived by execution of classic deferred-acceptance (DA) procedures \cite{GS,McVitie} but not by egalitarian social-cost optimizers \cite{ILG,Feder}. On the contrary, it is going to be shown that for another parametrization of $\nu$ the opposite situation holds.
	
	In details, consider the case where the costs of the agents are their rankings over the set of their possible partners, namely each agent attaches cost $k$ to the agent who is placed on the $k$'th place in her preference list. In addition, assume that $p_{m_1}=\frac{3}{4}=1-p_{\phi}$ while the preferences of the agents are given by:
	
	\begin{equation*}
	m_1: w_1\succ w_2\succ w_3\succ m_1
	\end{equation*}  
	\begin{equation*}
	m_2: w_2\succ w_3\succ w_1\succ m_2
	\end{equation*}
	\begin{equation*}
	m_3: w_3\succ w_1\succ w_2\succ m_3
	\end{equation*}
	\begin{equation*}
	w_1: m_2\succ m_3\succ m_1\succ w_1
	\end{equation*}
	\begin{equation*}
	w_2: m_3\succ m_1\succ m_2\succ w_2
	\end{equation*}
	\begin{equation*}
	w_3: m_1\succ m_2\succ m_3\succ w_3
	\end{equation*}
	
	Considering these preferences, as pointed by Gale and Shapley, there are three different stable matchings: The male optimal, female optimal and the egalitarian which are  given respectively by: 
	
	\begin{equation*}
	\mu_M=\{(m_1,w_1),(m_2,w_2),(m_3,w_3)\}
	\end{equation*}
	\begin{equation*}
	\mu_F=\{(m_1,w_3),(m_2,w_1),(m_3,w_2)\}
	\end{equation*}
	\begin{equation*}
	\mu_E=\{(m_1,w_2),(m_2,w_3),(m_3,w_1)\}
	\end{equation*}
	
	Moreover if $m_1$ leaves the model, it is possible to verify that both men-proposing and women-proposing algorithms which were described by Gale and Shapley in \cite{GS} return the same output. Hence, as a consequence, it can be deduced that there is one unique stable matching for the remaining agents which is given by 
	
	\begin{equation*}
	\mu_{m_1}=\{(w_1,w_1),(m_2,w_2),(m_3,w_3)\}
	\end{equation*}

	To start the examination of this example consider the parametrization $\nu=1$ which means that the matching-planner is totally focused on minimizing the expected sum of squares of the agents costs. Given this parametrization the following calculation shows that $\mu_E$ is the 1-perturbation robust stable matching:
	
	\begin{equation*}
	\psi(\mu_M;\nu=1)=\frac{1}{4}(1^2+1^2+1^2+3^2+3^2+3^2)+\frac{3}{4}(1^2+1^2+3^2+3^2+4^2)=34.5
	\end{equation*}
	\begin{equation*}
	\psi(\mu_E;\nu=1)=\frac{1}{4}(2^2+2^2+2^2+2^2+2^2)+\frac{3}{4}(2^2+2^2+2^2+2^2+4^2)=30
	\end{equation*}
	\begin{equation*}
	\psi(\mu_F;\nu=1)=\frac{1}{4}(3^2+3^2+3^2+1^2+1^2+1^2)+\frac{3}{4}(3^2+3^2+1^2+1^2+4^2)=34.5
	\end{equation*}

	Notice that if no one leaves the model , then simple calculations lead to the result that the optimal stable matching is $\mu_\phi=\mu_E$. If so, the following calculations show why $\mu_M\neq\mu_E$ is the $0$-perturbation robust stable matching: 
	
	\begin{equation*}
	\psi(\mu_M;\nu=0)=\frac{1}{4}(1^2+1^2+1^2+1^2+1^2+1^2)+\frac{3}{4}(0^2+0^2+0^2+0^2+1^2)=2\frac{1}{4}
	\end{equation*}
	\begin{equation*}
	\psi(\mu_E;\nu=1)=\frac{1}{4}(0^2+0^2+0^2+0^2+0^2+0^2)+\frac{3}{4}(1^2+1^2+1^2+1^2+2^2)=6
	\end{equation*}
	\begin{equation*}
	\psi(\mu_F;\nu=0)=\frac{1}{4}(1^2+1^2+1^2+1^2+1^2+1^2)+\frac{3}{4}(2^2+2^2+2^2+2^2+3^2)=20.25
	\end{equation*}
	
	Observe that $\mu_E$ is returned by the social cost minimizers \cite{Feder,ILG} while $\mu_M$ is the output of \cite{GS,McVitie}. Therefore this example shows that none of the classic DA procedures or the social cost minimizers can be applied directly to solve the $\nu$-perturbation robust stable matching problem systematically. Finally, it is of special importance to see how this example exemplifies the trade-off between the minimizations of the two terms of the function $\psi(\cdot;\nu)$.  
	
	\section{Algorithmic Solution}
	
	To enhance the curiosity about the problem depicted so far from an algorithmic point of view, observe the result made by Irving and Leather \cite{Irving}. Practically these authors pointed out an instance of the stable marriage problem which is associated with  an exponential nmber (in $n$) of different stable matchings. With regard to their interesting result, some social cost minimizations over the set of stable matching were found to be NP-hard \cite{Kato,Feder,Y} while for others, efficient algorithms were found \cite{GS,McVitie,Gusfield,ILG,Feder}. This section shows how the algorithmic approach of Irving, Leather and Gusfield \cite{ILG} can be modified in order to solve the $\nu$-perturbation stable matching problem in $\mathcal{O}[n^4log(n)]$-time.\footnote{ For reference, observe that the DA procedure and Feder's algorithm are associated with $\mathcal{O}(n^2)$ and $\mathcal{O}(n^3)$-time solutions respectively.} This section is consisted of three stages: To begin with, the term of rotations is reviewed together with some important and well-known results about its relation to the set of stable matchings. The second stage is focused on implementation of these results in order to have an equivalent graph representation of the optimization appearing in equation (1). Finally, it is shown that the same arguments of Irving \textit{et al} \cite{ILG} on the graph defined in stage 2 leads to an efficient solution.
	
	\subsection{Rotations}
	The following definitions and statements are sequentially made in order to recall the representation of the set of stable matchings in terms of rotations as done by Irving \textit{et al} \cite{Irving}.  
	
	\begin{definition}
		Considering the men-proposing Gale-Shapley algorithm \cite{GS}, define for each agent a shortlist in the following way:
		\begin{enumerate}
			\item Create a sorted list of the members of the opposite sex  with respect to the agent's preference list from the most preferred at the first place to the most undesired in the last. 
			\item If the agent is a male, delete from the list created in 1 all females who refused to the agent's proposal during the DA procedure.
			
			\item Otherwise, if the agent is a female, delete from the list created in 1 all males who were denied by the agent during the DA procedure. 
			\item Return the list to the user. 
		\end{enumerate}
	\end{definition}

	\begin{definition}
		A sequence of pairs $\rho:=\{(m_{i_0},w_{i_0}),\ldots,(m_{i_{r-1}},w_{i_{r-1}})\}$ is a rotation. Moreover, the rotation $\rho$ is exposed in the shortlists iff:
		\begin{enumerate}
			\item For any $k=0,\ldots,r-1$, $w_{i_k}$ is located in the first place of the shortlist of $m_{i_k}$.
			\item For any $k=0,\ldots,r-1$ (taken modulo $r$), $w_{i_{k+1}}$ is located in the second place in the shortlist of $m_{i_k}$.
		\end{enumerate}
		
	\end{definition}
	
	\begin{definition}
		A rotation $\rho$ which is exposed in the shortlists is said to be eliminated (or relaxed) iff:
		\begin{enumerate}
			\item For any $k=0,\ldots,r-1$ (taken modulo $r$), $m_{i_k}$ is matched with $w_{i_{k+1}}$. 
			\item For any $k=0,\ldots,r-1$, delete $w_{i_{k}}$ and $m_{i_{k}}$ from each other's shortlists. 
		\end{enumerate}
	\end{definition}
	
	Very intuitively, by making simple examples it is possible to see that there is some order in which the rotations can be exposed and eliminated sequentially. Formally, define this order as follows:
	
	\begin{definition}
		The rotational partial relation $\succ$ is defined as follows: For any two rotations $\pi$ and $\rho$, note $\pi\succ\rho$ iff $\rho$ must be eliminated before $\pi$ is exposed in the shortlists.  
	\end{definition}
	
	\begin{definition}
		The rotation poset (partially ordered set) is given by $P:=(R,\succ)$ where $R$ is the set of all possible rotations and  $\succ$ is the rotational partial relation. 
	\end{definition}
	
	\begin{definition}
		$C$ is a closed subset of $P$ iff $\rho\in C$ implies that any $\pi\prec\rho$ satisfies $\pi\in C$ as well.  
	\end{definition}
	
	At this stage, after establishment of this framework, recall the eminent result made by Irving and Leather \cite{Irving}. Notice that the strength of this result is by providing an option to express the set of stable matchings in terms of rotations and vice versa.
	
	\begin{theorem}[Irving and Leather-1987]
		There exists one to one correspondence between the set of all closed subsets of $P$ and the set of stable matchings.
	\end{theorem}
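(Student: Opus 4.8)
The plan is to construct an explicit bijection $\Phi$ from the set of stable matchings to the set of closed subsets of $P$, using rotation elimination as the mechanism that moves through the family of stable matchings. First I would fix the male-optimal stable matching $\mu_M$, which the men-proposing procedure outputs before any rotation has been eliminated, and associate it with the empty closed subset $\emptyset$. The groundwork would be two operational lemmas about the elimination rule of Definition 4: first, that whenever a rotation $\rho$ is exposed in the current shortlists, eliminating it turns the current stable matching into another stable matching; and second, that each elimination strictly worsens the men's costs while improving the women's costs, so the process cannot cycle and must terminate at the female-optimal matching $\mu_F$, by which point every rotation in $R$ has been eliminated exactly once.

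With these in place I would define $\Phi(\mu)$ to be the set of rotations that must be eliminated, starting from $\mu_M$, in order to reach $\mu$. Two things need checking. The set $\Phi(\mu)$ is closed: by Definition 5 the relation $\pi\succ\rho$ means $\rho$ must be eliminated before $\pi$ is even exposed, so any rotation lying below an eliminated one has necessarily already been eliminated, which is exactly the downward closure of Definition 7. And $\Phi(\mu)$ must be well defined, i.e. independent of the order in which rotations were removed to reach $\mu$; this is a confluence property, which I would obtain from a diamond lemma: if two distinct rotations $\rho_1,\rho_2$ are simultaneously exposed then neither lies below the other in $\succ$, their eliminations commute, and the two orders yield the same matching.

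For surjectivity I would construct the inverse $\Psi$: given a closed subset $C$, choose any linear extension of $\succ$ restricted to $C$ and eliminate the rotations of $C$ in that order. Closedness of $C$ guarantees that when each rotation's turn arrives all of its predecessors have already been removed, so it is genuinely exposed and can be eliminated; the diamond lemma again shows that the resulting matching $\Psi(C)$ is independent of the chosen linear extension. It then remains to verify that $\Phi$ and $\Psi$ are mutually inverse, which follows because the elimination sequence producing $\Psi(C)$ is itself a valid witness showing $\Phi(\Psi(C))=C$, and symmetrically $\Psi(\Phi(\mu))=\mu$.

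The hard part will be the reachability claim hidden inside the definition of $\Phi$: that every stable matching $\mu$ actually arises from $\mu_M$ by some sequence of eliminations of exposed rotations. I would argue this by induction on the men-dominance order, showing that whenever the running matching differs from $\mu$ there is an exposed rotation whose elimination keeps the running matching weakly above $\mu$, in the sense that every man still weakly prefers his running partner to his $\mu$-partner. Identifying such a rotation requires examining a man whose running partner differs from his $\mu$-partner and using stability of $\mu$ to trace out a cyclic structure among the mismatched pairs. The lattice structure of stable matchings is what makes this selection possible, and checking that the selected cycle is genuinely exposed in the current shortlists is the most delicate bookkeeping step of the whole argument.
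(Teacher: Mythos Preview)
The paper does not prove this theorem at all: it is stated as Theorem~1 with the attribution ``Irving and Leather--1987'' and simply cited from \cite{Irving}, with no argument given. So there is nothing in the paper to compare your proposal against.

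That said, your outline is essentially the standard Irving--Leather argument, and the decomposition you describe is the right one. A few remarks on places where the sketch would need tightening if you were to write it out in full. First, your definition of $\Phi(\mu)$ as ``the set of rotations that must be eliminated \ldots\ to reach $\mu$'' presupposes the reachability claim you only address at the end; it is cleaner to define $\Psi$ first on closed subsets, prove it is injective (distinct closed subsets yield distinct matchings, which follows because each rotation elimination changes a specific set of partners in an irreversible direction), and then prove surjectivity of $\Psi$ via the reachability argument---this avoids the circularity of defining $\Phi$ before you know every $\mu$ is reachable. Second, the diamond lemma as you state it (two simultaneously exposed rotations commute) is correct but you also need that they are \emph{disjoint} as sets of pairs, which is what makes the commuting elementary; this follows from the shortlist characterization but should be said. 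Third, in the reachability induction the delicate point is not just finding a cycle among mismatched men but showing that the \emph{second} entries on their shortlists are exactly the $\mu$-direction partners, which is where stability of $\mu$ together with the shortlist deletions of Definition~2 are both used. Your sketch gestures at this but does not isolate it.

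None of these are genuine gaps in the sense of a wrong approach; they are the expected places where a proof outline would need to become a proof.
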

	
	\subsection{Rephrased Optimization}
	To re-formulate the optimization (1) on a graph to be later defined, consider a fixed $\nu\in[0,1]$ and an arbitrary rotation $\rho:=\{(m_{i_0},w_{i_0}), \ldots,(m_{i_{r-1}},w_{i_{r-1}})\}$ which is exposed in the shortlists of a matching $\mu$. Then, make the notation $\mu/\rho$ for the matching created from $\mu$ by the elimination of $\rho$. With respect to these notations, the change in the social cost function which is occurred by the elimination of $\rho$ is defined as follows
	
	\begin{equation}
	W(\rho;\mu):=\psi(\mu/\rho;\nu)-\psi(\mu;\nu)
	\end{equation}
	
	The next theorem claims that this change is invariant to the matching from which $\rho$ is eliminated from. 
	
	\begin{theorem}
		If $\mu_1$ and $\mu_2$ are two stable matchings in which the rotation $\rho$ is exposed in their shortlists, then $W(\rho;\mu_1)= W(\rho;\mu_2)$
	\end{theorem}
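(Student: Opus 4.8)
The plan is to exploit the \emph{locality} of rotation elimination: eliminating an exposed rotation $\rho=\{(m_{i_0},w_{i_0}),\ldots,(m_{i_{r-1}},w_{i_{r-1}})\}$ changes the partner of an agent only if that agent participates in $\rho$, and the pre- and post-elimination partners of each participating agent are dictated by $\rho$ alone. Indeed, by Definition 2 (exposure) the hypothesis that $\rho$ is exposed in $\mu$ forces $\mu$ to match each $m_{i_k}$ with $w_{i_k}$, while by Definition 3 (elimination) the matching $\mu/\rho$ matches each $m_{i_k}$ with $w_{i_{k+1}}$ (indices modulo $r$), so that each woman $w_{i_k}$ passes from $m_{i_k}$ to $m_{i_{k-1}}$. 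One also checks that no agent outside $\rho$ is matched under $\mu$ to an agent inside $\rho$, so that elimination merely permutes partners \emph{within} $\rho$ and leaves everyone else fixed. Crucially, none of these ``before'' and ``after'' partners depend on the host matching $\mu$ --- only on $\rho$. I would first record this as the central observation and verify it directly from Definitions 2 and 3.

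Next I would write the objective as a finite average over leaving scenarios,
\begin{equation*}
\psi(\mu;\nu)=\sum_{\omega_0\in\Omega\cup\{\phi\}}p_{\omega_0}\Big[\nu\,\|c_{\omega_0}(\mu)\|^2+(1-\nu)\,\|c_{\omega_0}(\mu)-c_{\omega_0}(\mu_{\omega_0})\|^2\Big],
\end{equation*}
and expand the second squared norm as $\|c_{\omega_0}(\mu)\|^2-2\langle c_{\omega_0}(\mu),c_{\omega_0}(\mu_{\omega_0})\rangle+\|c_{\omega_0}(\mu_{\omega_0})\|^2$. The purpose of this expansion is that $\mu_{\omega_0}$ is a fixed reference matching not depending on the decision variable $\mu$; hence the term $\|c_{\omega_0}(\mu_{\omega_0})\|^2$ is a pure constant and cancels in the difference $W(\rho;\mu)=\psi(\mu/\rho;\nu)-\psi(\mu;\nu)$.

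I would then compute $W(\rho;\mu)$ scenario by scenario. For fixed $\omega_0$, each coordinate of $c_{\omega_0}(\mu)$ is the fixed cost that a single agent assigns to its $\mu$-partner, so the coordinate indexed by $a$ is unchanged by elimination whenever $a\notin\rho$. Consequently the quadratic increment $\|c_{\omega_0}(\mu/\rho)\|^2-\|c_{\omega_0}(\mu)\|^2$ collapses to a sum, over the agents of $\rho$ other than $\omega_0$, of differences of squared individual costs, while the linear increment $\langle c_{\omega_0}(\mu/\rho)-c_{\omega_0}(\mu),\,c_{\omega_0}(\mu_{\omega_0})\rangle$ pairs the fixed vector $c_{\omega_0}(\mu_{\omega_0})$ against a difference vector supported on those same agents. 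By the central observation, both the participating agents and their before/after partners are functions of $\rho$ alone, and $c_{\omega_0}(\mu_{\omega_0})$ never involves $\mu$; therefore every surviving term depends only on $\rho$ and on $\omega_0$, not on $\mu$. Averaging over $\omega_0$ with the weights $p_{\omega_0}$ preserves this $\mu$-independence, yielding $W(\rho;\mu_1)=W(\rho;\mu_2)$.

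The main obstacle I anticipate is the bookkeeping in the scenario where the leaving agent itself participates in the rotation, $\omega_0\in\rho$. There one must check that $c_{\omega_0}(\mu/\rho)-c_{\omega_0}(\mu)$ is still supported on $\rho\setminus\{\omega_0\}$ with entries fixed by $\rho$: the coordinate of $\omega_0$ is simply absent from $c_{\omega_0}$, while the partners acquired or lost by $\omega_0$'s rotation-neighbours are again determined by $\rho$, so the argument applies verbatim. A second, smaller point to confirm is the standard fact invoked above --- that exposure of $\rho$ in the shortlists of $\mu$ genuinely pins down $\mu$ on the agents of $\rho$ --- which follows from the shortlist characterisation underlying Definitions 2 and 3.
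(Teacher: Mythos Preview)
Your proposal is correct and follows essentially the same route as the paper: expand $\psi(\mu/\rho;\nu)-\psi(\mu;\nu)$ coordinatewise, observe that only the coordinates of agents in $\rho$ survive, and conclude that the remaining expression depends only on $\rho$. If anything, you are more explicit than the paper in spelling out that exposure pins down the pre- and post-elimination partners of $\rho$'s agents (so the surviving terms are genuinely $\mu$-free) and in handling the bookkeeping when $\omega_0\in\rho$.
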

	
	\begin{proof}
		 Let $\mu$ be a stable matching in which the rotation $\rho$ is exposed in its shortlist and make the notation $c_{\omega_0}(\omega;\mu)$ for the cost of the agent $\omega$ in the cost vector $c_{\omega_0}(\mu)$. Using this notation, the guideline of this proof is to show that $W(\rho;\mu)$ is a function which is determined uniquely by the costs of the agents who belong to $\rho$. To this end, observe that for any agent $\omega\in\Omega$ who isn't paired in $\rho$,  $c_{\omega_0}(\omega;\mu)=c_{\omega_0}(\omega;\mu/\rho)$ and hence
		 
		 \begin{equation}
		 \end{equation}
		 \begin{equation*}
		 W(\rho;\mu_1)=\sum_{w_0\in\Omega\cup\phi}p_{\omega_0}\Big[\nu||c_{\omega_0}(\mu/\rho)||^2+(1-\nu)||c_{\omega_0}(\mu/\rho)-c_{\omega_0}(\mu_{\omega_0})||^2\Big]-
		 \end{equation*}
		 \begin{equation*}
		 -\sum_{w_0\in\Omega\cup\phi}p_{\omega_0}\Big[\nu||c_{\omega_0}(\mu)||^2+(1-\nu)||c_{\omega_0}(\mu)-c_{\omega_0}(\mu_{\omega_0})||^2\Big]=
		 \end{equation*}
		 \begin{equation*}
		 =\sum_{w_0\in\Omega\cup\phi}p_{\omega_0}\sum_{\omega\in\Omega/\{\omega_0\}}\Big\{\nu[c_{\omega_0}^2(\omega;\mu/\rho)-c_{\omega_0}^2(\omega;\mu)]+
		 \end{equation*}
		 \begin{equation*}
		 +(1-\nu)[c_{\omega_0}^2(\omega;\mu/\rho)-c_{\omega_0}^2(\omega;\mu)-2c_{\omega_0}(w;\mu_{\omega_0})\cdot(c_{\omega_0}(\omega;\mu/\rho)-c_{\omega_0}(\omega;\mu))]\Big\}=
		 \end{equation*}
		 \begin{equation*}
		 =\sum_{w_0\in\Omega\cup\phi}p_{\omega_0}\sum_{\omega\in\rho}\Big\{\nu[c_{\omega_0}^2(\omega;\mu/\rho)-c_{\omega_0}^2(\omega;\mu)]+
		 \end{equation*}
		 \begin{equation*}
		 +(1-\nu)[c_{\omega_0}^2(\omega;\mu/\rho)-c_{\omega_0}^2(\omega;\mu)-2c_{\omega_0}(w;\mu_{\omega_0})\cdot(c_{\omega_0}(\omega;\mu/\rho)-c_{\omega_0}(\omega;\mu))]\Big\}
		 \end{equation*}
		 
		 Finally, since the internal sum is only over $\omega$'s that are belonged to $\rho$, it becomes clear why the change is uniquely determined by $\rho$.  	 
	\end{proof}

	The following definition is of the rotation poset weighted directed acyclic graph (DAG) which is later going to be helpful in the reformulation of the $\nu$-perturbation robust stable matching optimization. 
	
	\begin{definition}
		The rotation poset weighted-DAG $G(P):=(V,E,W)$ is defined as follows:
		
		\begin{enumerate}
			\item The set of vertices is the set of rotations, i.e. $V=R$.
			\item A pair $(\rho,\pi)\in R\times R$ is associated with a directed edge from $\rho$ to $\pi$ in $E$ iff $\rho\prec\pi$ and there is no $\tau\in R$ such that $\rho\prec\tau\prec\pi$.
			\item For each vertex $\rho\in R$ there is a weight  $-W(\rho)$.   
		\end{enumerate} 
		
	\end{definition}
	
	\begin{definition}
		A subset of vertices of a DAG is called closed iff for each node in the set, all the parents are in the set.
	\end{definition}
	
	\begin{corollary}
		The minimization defined by equation (1) is equivalent to finding the maximal weight closed subset of $G(P)$.
	\end{corollary}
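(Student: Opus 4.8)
The plan is to reduce the minimization in equation (1) to an additive objective over closed subsets, using Theorem 1 to pass from stable matchings to closed subsets of $P$ and Theorem 2 to render the cost increments path-independent. First I would fix as a reference point the man-optimal stable matching $\mu_0$, which under the correspondence of Theorem 1 is the stable matching associated with the empty closed subset $\emptyset$ (no rotation has yet been eliminated). For an arbitrary stable matching $\mu$ let $C\subseteq R$ denote the closed subset of $P$ that Theorem 1 assigns to it; concretely, $\mu$ is obtained from $\mu_0$ by eliminating exactly the rotations in $C$.

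The key step is to prove the additive identity
\[
\psi(\mu;\nu)=\psi(\mu_0;\nu)+\sum_{\rho\in C}W(\rho).
\]
To this end I would fix a linear extension $\rho_1\prec\rho_2\prec\cdots\prec\rho_{|C|}$ of the rotations of $C$ consistent with $\succ$; such an order exists and yields a valid sequence of eliminations producing the intermediate matchings $\mu_0=\mu^{(0)},\mu^{(1)},\ldots,\mu^{(|C|)}=\mu$, where $\mu^{(j)}=\mu^{(j-1)}/\rho_j$. By the definition of $W$ in equation (3), each step contributes $\psi(\mu^{(j)};\nu)-\psi(\mu^{(j-1)};\nu)=W(\rho_j;\mu^{(j-1)})$, and by Theorem 2 this increment equals $W(\rho_j)$ regardless of the matching from which $\rho_j$ is eliminated. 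Summing the telescoping differences over $j$ then gives the displayed identity, with the right-hand sum depending only on the \emph{set} $C$ and not on the chosen order.

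With this identity in hand, the conclusion is immediate. Since $\psi(\mu_0;\nu)$ is a constant independent of the choice of $\mu$, minimizing $\psi(\mu;\nu)$ over stable matchings is equivalent to minimizing $\sum_{\rho\in C}W(\rho)$ over closed subsets $C$, i.e.\ to maximizing $\sum_{\rho\in C}\bigl(-W(\rho)\bigr)$. By Definition 8 the weight assigned to each vertex $\rho$ of $G(P)$ is exactly $-W(\rho)$, so the latter quantity is the total weight of the vertex set $C$ in $G(P)$. It remains only to check that the two notions of closedness coincide: a subset is closed in the sense of Definition 9 (every parent of a chosen node is chosen) precisely when it is down-closed under $\succ$ in the sense of Definition 7, because the edges of $G(P)$ encode the covering relation of the poset and closure under covers is equivalent to closure under the full order. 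Hence closed subsets of $P$ and closed vertex subsets of $G(P)$ are the same objects, and a maximal-weight closed subset of $G(P)$ corresponds under Theorem 1 to a minimizer of equation (1).

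I would expect the main obstacle to be the careful justification of the additive identity: one must ensure that a valid elimination order of the rotations in $C$ exists and terminates in $\mu$, and that Theorem 2 legitimately removes the dependence of each increment on the intermediate matching, so that the sum is genuinely a function of $C$ alone. Verifying the equivalence of the two closedness definitions is routine but should be stated explicitly, since the corollary is phrased in terms of $G(P)$ whereas Theorem 1 is phrased in terms of $P$.
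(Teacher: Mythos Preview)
Your proposal is correct and follows essentially the same approach as the paper: identify closed subsets of $G(P)$ with closed subsets of $P$, invoke Theorem~1, and observe that the total vertex weight of a closed subset equals $\psi(\mu_0;\nu)-\psi(\mu;\nu)$, so that maximizing weight over closed subsets is equivalent to minimizing $\psi$ over stable matchings. Your version is in fact more careful than the paper's---you make explicit the telescoping argument along a linear extension of $C$ and the use of Theorem~2 to ensure path-independence---whereas the paper simply asserts that the total weight equals the difference in $\psi$-values without spelling out why.
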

	
	\begin{proof}
		Notice that the set of closed subsets of $G(P)$ is exactly the same as the set of all closed subsets of $P$. Then, by Theorem 1, deduce that there exists one to one correspondence between the set of all closed subsets of $G(P)$ and the set of all stable matchings. Now, because the weight of each vertex is the minus of the change in the social cost occurred by 'passing' in this vertex (eliminating the vertex rotation), then the result is that the total weight of any closed subset of $G(P)$ is the difference in the values of $\psi(\cdot;\nu)$ which are associated respectively with the stable-matching defined by the closed subset and the male-optimal stable matching. If so, the result is that the maximal weight closed subset of $G(P)$ is the one which solves the minimization appeared in equation (1). 
		
	\end{proof}

	\subsection{$\mathcal{O}[n^4log(n)]$-Time Algorithm}
	
	The algorithm to be described is consisted of two stages: First, perform a computation of the set of weights $W$ by the formula described in equation (4). Second, given the set $W$, find the maximal-weight closed subset of $G(P)$ by the method described in \cite{ILG}. To analyze the complexity of this algorithm, it is shown that the first stage takes  $\mathcal{O}(n^4)$-time. Observe that for convenience, there is a review about the arguments of \cite{ILG} which demonstrate how the second stage can be done in $\mathcal{O}[n^4log(n)]$-time.  To start with, recall a known result made by Irving \textit{et al} \cite{ILG}:
	
	\begin{theorem}[Irving Leather and Gusfield-1987]
		\ \
		
		\begin{enumerate}
			\item $|R|=\mathcal{O}(n^2)$
			\item The set of all rotations $R$ can be computed in $\mathcal{O}(n^3)$-time.
		\end{enumerate}	
		
	\end{theorem}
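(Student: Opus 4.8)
The plan is to prove the two parts separately, both resting on the monotonicity of the rotation-elimination process that carries the man-optimal matching to the woman-optimal matching. Throughout I would freely use Theorem 1 and the rotation machinery of Definitions 3 and 4.

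For Part 1, I would first recall the structural fact underlying Theorem 1: starting from the man-optimal matching and repeatedly eliminating exposed rotations until no rotation is exposed yields the woman-optimal matching, and along any such maximal sequence every rotation of $R$ is eliminated exactly once. The ingredient I would then isolate is a \emph{monotonicity} property of an elimination: when $\rho=\{(m_{i_0},w_{i_0}),\dots,(m_{i_{r-1}},w_{i_{r-1}})\}$ is eliminated, each man $m_{i_k}$ is re-matched from $w_{i_k}$ to the strictly less-preferred $w_{i_{k+1}}$ (Definition 4), so along the whole sequence each man's partner strictly descends his preference list and never repeats. From this I would deduce that for each ordered pair $(m,w)$ there is at most one rotation listing $(m,w)$, since Definition 3 forces $w_{i_k}$ to sit first on $m_{i_k}$'s shortlist, i.e.\ to be $m_{i_k}$'s current partner at the moment of elimination, and that moment is unique. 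Hence $\sum_{\rho\in R}|\rho|\le n^2$, the number of man--woman pairs. Because a rotation cannot have a single pair (that would force one woman to occupy both the first and second positions of a man's shortlist, contradicting Definition 3), every rotation has $|\rho|\ge 2$, giving $|R|\le n^2/2=\mathcal{O}(n^2)$.

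For Part 2, I would describe the constructive procedure and bound its running time. First compute the man-optimal matching and its shortlists by the men-proposing Gale--Shapley algorithm \cite{GS} in $\mathcal{O}(n^2)$ time. Then repeatedly locate an exposed rotation, record it in $R$, and eliminate it (Definition 4), halting when every shortlist has length one. To find an exposed rotation I would use pointer-chasing: from any man whose shortlist still has at least two entries, follow the map sending a man to the current partner of the second woman on his shortlist; this map must close into a cycle, and that cycle is an exposed rotation in the sense of Definition 3. Tracing one cycle costs $\mathcal{O}(n)$, and eliminating it updates a bounded set of partners and deletes the corresponding shortlist entries. Combining budgets: by Part 1 there are $\mathcal{O}(n^2)$ rotations, each found in $\mathcal{O}(n)$ worst-case time, so the search work is $\mathcal{O}(n^3)$; separately, the total number of shortlist entries ever deleted is at most the total list length $\mathcal{O}(n^2)$, so deletion does not dominate. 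This gives the stated $\mathcal{O}(n^3)$ bound.

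The genuinely nontrivial ingredient, and the step I expect to be the main obstacle, is the monotonicity/injectivity claim in Part 1: that a complete elimination sequence visits every rotation exactly once and that the listed pairs $(m,w)$ are distinct across all rotations. This is precisely where the lattice structure of the set of stable matchings (and hence the correspondence of Theorem 1) must be invoked, and it is what I would justify most carefully; the remaining counting is bookkeeping. For Part 2 the only delicate point is ensuring that repeated searches for exposed rotations do not cause redundant rescanning — a more careful amortized analysis in fact yields $\mathcal{O}(n^2)$ for assembling all rotations, but the cruder ``$\mathcal{O}(n^2)$ rotations, $\mathcal{O}(n)$ work apiece'' accounting already establishes the $\mathcal{O}(n^3)$ bound claimed here.
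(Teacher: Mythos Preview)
The paper does not prove this theorem at all: it is stated as a known result attributed to Irving, Leather and Gusfield \cite{ILG} and immediately followed by the next theorem with no intervening proof. There is therefore nothing in the paper to compare your argument against.

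That said, your sketch is essentially the standard argument from the cited source. The injectivity claim in Part~1 (each man--woman pair appears in at most one rotation) and the pointer-chasing search of Part~2 are exactly how \cite{Irving,ILG} establish these bounds, so your approach is correct and aligned with the literature the paper defers to. Your remark that a sharper amortized analysis gives $\mathcal{O}(n^2)$ for enumerating all rotations is also accurate and is noted in \cite{ILG}; the $\mathcal{O}(n^3)$ bound quoted here is the cruder version sufficient for this paper's purposes.
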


	\begin{theorem}
		The set $\{\mu_{\omega_0}\}_{\omega_0\in\Omega}$ can be computed in $\mathcal{O}(n^4)$-time. 
	\end{theorem}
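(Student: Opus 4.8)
The plan is to exploit two facts: the collection $\{\mu_{\omega_0}\}_{\omega_0\in\Omega}$ contains only $\mathcal{O}(n)$ matchings, one per agent who might leave, and each single matching is produced by one call to Feder's $\mathcal{O}(n^3)$-time algorithm on a reduced instance. Multiplying these two bounds will yield the claimed $\mathcal{O}(n^4)$.

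First I would note that $|\Omega|=2n$, since the market consists of $n$ men and $n$ women, so there are exactly $2n$ candidate agents $\omega_0$ indexing the collection. (The degenerate case $\omega_0=\phi$ requires only the optimal matching of the original instance, a single additional $\mathcal{O}(n^3)$ computation that does not affect the order.) Hence the index set has cardinality $\mathcal{O}(n)$, and it suffices to bound the cost of a single $\mu_{\omega_0}$ by $\mathcal{O}(n^3)$.

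Second, for a fixed $\omega_0\in\Omega$ I would describe how to build the reduced SMP instance that excludes $\omega_0$. Deleting an agent amounts to striking one preference list and, in every remaining list, deleting the single entry corresponding to $\omega_0$; this takes at most $\mathcal{O}(n^2)$-time, which is dominated by the matching step. The resulting instance is mildly unbalanced, with $n-1$ agents on one side and $n$ on the other; because each agent also ranks herself, the notion of an unmatched agent and the stability criterion remain well defined, so Feder's framework applies. Invoking Feder's algorithm on this reduced instance then returns $\mu_{\omega_0}\in\arg\min_\mu\|c_{\omega_0}(\mu)\|^2$ over the stable matchings of the reduced instance in $\mathcal{O}(n^3)$-time. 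Repeating over all $2n$ choices of $\omega_0$ gives total running time $2n\cdot\mathcal{O}(n^3)=\mathcal{O}(n^4)$.

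The step I expect to require the most care is verifying that Feder's algorithm, whose stated $\mathcal{O}(n^3)$ bound is for a balanced $n\times n$ market, retains both its correctness and its complexity on the unbalanced reduced instances, and that admitting the self-matching option leaves its guarantees intact. Once this is confirmed, the bound follows immediately as the product of the $\mathcal{O}(n)$ index size and the $\mathcal{O}(n^3)$ per-instance cost.
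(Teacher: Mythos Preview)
Your proposal is correct and follows essentially the same argument as the paper: invoke Feder's $\mathcal{O}(n^3)$ algorithm once for each of the $\mathcal{O}(n)$ agents in $\Omega\cup\phi$ and multiply to obtain $\mathcal{O}(n^4)$. The paper's proof is in fact terser than yours, omitting the discussion of building the reduced instance and the caveat about unbalanced markets; your added care on those points is reasonable but not something the paper itself addresses.
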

	
	\begin{proof}
		
		By Feder $\cite{Feder}$, for each $\omega_0\in\Omega
		\cup\phi$, $\mu_{\omega_0}$ can be found in $\mathcal{O}(n^3)$-time. Recalling that $|\omega_0\in\Omega
		\cup\phi|=\mathcal{O}(n)$ then $\mathcal{O}(n^4)$-time is needed in order to compute $\{\mu_\omega\}_{\omega\in\Omega\cup\phi}$. 
		
	\end{proof}

	\begin{theorem}
		The set $\{W(\rho);\rho\in R\}$ can be computed in $\mathcal{O}(n^4)$-time.
	\end{theorem}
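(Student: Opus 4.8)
The plan is to evaluate every weight directly from the closed-form expression for $W(\rho)$ obtained in the proof of Theorem 2, and then to bound the total cost by exploiting the fact that the rotations, though $\mathcal{O}(n^2)$ in number, have total size only $\mathcal{O}(n^2)$. The decisive structural feature of that expression is that, after cancellation, its inner sum ranges only over the agents $\omega$ paired inside $\rho$; every agent untouched by $\rho$ contributes nothing. First I would assemble the data the formula needs. A table of ranks, storing for each agent the position of every potential partner in its preference list, is built in $\mathcal{O}(n^2)$ time and lets any single cost $c_{\omega_0}(\omega;\mu)$ be read in $\mathcal{O}(1)$ time. By Theorem 3 the set $R$ is produced in $\mathcal{O}(n^3)$ time, and the enumeration simultaneously records, for each $\rho=\{(m_{i_0},w_{i_0}),\ldots,(m_{i_{r-1}},w_{i_{r-1}})\}$, the partners before elimination and the partners $(m_{i_k},w_{i_{k+1}})$ created by it; hence for every $\omega\in\rho$ both of its partners are known. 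Finally the family $\{\mu_{\omega_0}\}_{\omega_0\in\Omega\cup\phi}$ is computed in $\mathcal{O}(n^4)$ time by Theorem 4, supplying the factor $c_{\omega_0}(\omega;\mu_{\omega_0})$ in the cross term.

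With these tables in hand, each summand is evaluated in constant time: for a fixed $\omega_0$ and $\rho$, every $\omega\in\rho$ contributes a squared-cost difference $c_{\omega_0}^2(\omega;\mu/\rho)-c_{\omega_0}^2(\omega;\mu)$, which is a difference of two rank look-ups, together with the linear cross term, whose factor $c_{\omega_0}(\omega;\mu_{\omega_0})$ is a single look-up in the precomputed matching. Consequently forming $W(\rho)$ for a fixed $\omega_0$ costs $\mathcal{O}(|\rho|)$, and the total work over all rotations and all $\omega_0$ is
\begin{equation*}
\sum_{\omega_0\in\Omega\cup\phi}\sum_{\rho\in R}\mathcal{O}(|\rho|)=\mathcal{O}(n)\cdot\mathcal{O}\Big(\sum_{\rho\in R}|\rho|\Big).
\end{equation*}

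The crux is therefore the combinatorial bound $\sum_{\rho\in R}|\rho|=\mathcal{O}(n^2)$. I would prove it by charging each appearance of an agent in a rotation to a downward step along its preference list: when $\rho$ is eliminated, every man $m_{i_k}\in\rho$ is re-matched from $w_{i_k}$ to the strictly less preferred $w_{i_{k+1}}$, so the number of rotations containing a fixed man is at most the length of his shortlist, i.e. $\mathcal{O}(n)$, and symmetrically for the women. Summing over the $\mathcal{O}(n)$ agents gives $\sum_{\rho}|\rho|=\mathcal{O}(n^2)$, so the displayed quantity is $\mathcal{O}(n^3)$. Thus, once the tables are built, the weights are assembled in $\mathcal{O}(n^3)$ time, and the overall cost is dominated by the $\mathcal{O}(n^4)$ spent computing $\{\mu_{\omega_0}\}$, establishing the claim.

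The step I expect to require the most care is not the asymptotics but the bookkeeping of $c_{\omega_0}(\omega;\mu)$ when $\omega_0$ is incident to $\rho$ or is the current partner of one of its agents: in those cases the departure of $\omega_0$ changes which coordinates are present in the cost vector, and one must verify that at most a constant number of agents per rotation are affected so that the $\mathcal{O}(|\rho|)$-per-$\omega_0$ estimate, and hence the final $\mathcal{O}(n^4)$ bound, is preserved.
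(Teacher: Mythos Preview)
Your argument is correct and follows the same overall plan as the paper: first compute $\{\mu_{\omega_0}\}_{\omega_0\in\Omega\cup\phi}$ in $\mathcal{O}(n^4)$ via Theorem~4, then evaluate each $W(\rho)$ from the reduced double sum of equation~(4). The difference is in how the second phase is bounded. The paper simply observes that for each rotation the double sum has at most $\mathcal{O}(n)$ outer terms (over $\omega_0$) and at most $\mathcal{O}(n)$ inner terms (over $\omega\in\rho$), giving $\mathcal{O}(n^2)$ per rotation and, with $|R|=\mathcal{O}(n^2)$, a total of $\mathcal{O}(n^4)$. You instead invoke the aggregate bound $\sum_{\rho\in R}|\rho|=\mathcal{O}(n^2)$---which is correct, since each pair $(m,w)$ lies in at most one rotation---and thereby sharpen the weight-assembly phase to $\mathcal{O}(n^3)$. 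This refinement is valid but unnecessary for the stated theorem, since the $\mathcal{O}(n^4)$ precomputation of the $\mu_{\omega_0}$ dominates either way; the paper's cruder per-rotation bound is sufficient and avoids the extra combinatorial lemma. Your closing concern about the case $\omega_0\in\rho$ is also harmless: when $\omega_0$ itself is paired in $\rho$, its coordinate is simply absent from $c_{\omega_0}(\cdot)$, and when $\omega_0$ is the $\mu$-partner of some $\omega\in\rho$ then $\omega_0$ is already in $\rho$ (all partners of rotation agents, before and after elimination, lie in $\rho$), so only an $\mathcal{O}(1)$ case distinction per summand is needed and the per-term constant-time evaluation stands.
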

	
	\begin{proof}
		First compute the set $\{\mu_\omega\}_{\omega\in\Omega\cup\phi}$ in $\mathcal{O}(n^4)$-time. Then, for each $\rho\in R$ compute $W(\rho)$ by equation (4). As one can easily see, this expression involves double summation whose calculation is bounded by $\mathcal{O}(n^2)$ arithmetic operations. Recalling that $|R|=\mathcal{O}(n^2)$ deduce that $\{W(\rho);\rho\in R\}$ can be computed in $\mathcal{O}(n^4)$.
		
	\end{proof}
	
	To find a solution for the $\nu$-perturbation-robust stable-matching optimization, one can use the same technique used by Irving \textit{et al} \cite{ILG}: Construct a sparse subgraph $G'(P)$ of $G(P)$ which preserves all closed subsets of $G(P)$. By doing so, it is enough to find the maximal weighted closed subset of $G'(P)$. Notice that since $G'(P)$ is sparse, then this task should be easier from a computational point of view. In details, $G'(P)$ shares the same set of vertices with $G(P)$. The set of edges of $G'(P)$ is formally defined in \cite{ILG}. Notice that for each edge which belongs to $G'(P)$, its weight is identical to the weight of the same edge in $G(P)$. For simplicity of notation denote the set of vertices, edges and weights of $G'(P)$ respectively by $V'$, $E'$ and $W'$. Then, the main properties of $G'(P)$ are summarized by the following theorem:
	
	\begin{theorem}[Irving, Leather and Gusfield-1987] 
		\ \
		\begin{enumerate}
			\item $|V'|,|E'|\leq\mathcal{O}(n^2)$
			\item The unweighted directed acyclic graph $G'(P)$ can be computed in $\mathcal{O}(n^3)$-time.
			\item The transitive closure of $G'(P)$ is $G(P)$.
		\end{enumerate}
		
	\end{theorem}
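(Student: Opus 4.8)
Since this is a restatement of a structural result of Irving, Leather and Gusfield, the plan is to reconstruct their argument by first making the edge set $E'$ explicit and then verifying the three claims in turn. Recall that eliminating the rotation $\rho=\{(m_{i_0},w_{i_0}),\ldots,(m_{i_{r-1}},w_{i_{r-1}})\}$ moves each man $m_{i_k}$ from $w_{i_k}$ down to $w_{i_{k+1}}$ and each woman $w_{i_k}$ up from $m_{i_k}$ to $m_{i_{k-1}}$ (indices modulo $r$). The edges of $G'(P)$ are generated by two local rules stated in terms of the shortlists produced by the men-proposing Gale--Shapley procedure. The type-1 rule charges, for each man $m$, the unique rotation $\sigma$ that places him with a woman $w$ against the unique rotation $\tau$ that later removes $w$ from his shortlist, and adds the edge $\sigma\to\tau$; this encodes that $m$ cannot take part in $\tau$ before $\sigma$ has put him with $w$. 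The type-2 rule records precedences created through the women: when a rotation $\tau$ improves a woman $w$ from a man $m$ to a strictly better man, any rotation that must pair $w$ with a man she ranks below her $\tau$-partner is forced to occur before $\tau$ is exposed, and an edge into $\tau$ is added for the nearest such rotation. Writing these two families down precisely is the first step, since every subsequent claim is a bookkeeping statement about them.

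For claim~1, observe that $V'=R$, so $|V'|=\mathcal{O}(n^2)$ by the recalled theorem on the size of $R$. To bound $|E'|$ I would set up a charging argument: every type-1 edge is associated with a distinct $(m,w)$ pair, namely the pair created at its tail and destroyed at its head, and every type-2 edge is likewise chargeable to a distinct displaced-pair record. Since there are only $n^2$ pairs of each kind, $|E'|=\mathcal{O}(n^2)$ follows at once.

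For claim~2, I would first enumerate all rotations and record, for each man and each woman, the ordered sequence of rotations that move them; this is exactly the information produced while computing $R$, which by the recalled result costs $\mathcal{O}(n^3)$ time. Given these sequences, both edge families are produced by a single pass that, for each of the $\mathcal{O}(n^2)$ incidences, looks up the relevant predecessor or successor rotation in $\mathcal{O}(1)$ amortized time after an $\mathcal{O}(n^2)$ preprocessing of pointers. The enumeration of $R$ dominates, so the whole construction runs in $\mathcal{O}(n^3)$ time.

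Claim~3 is the substantive part and I expect it to be the main obstacle. It splits into soundness and completeness. Soundness---that every edge $\sigma\to\tau$ of $G'(P)$ satisfies $\sigma\prec\tau$---is immediate, since each rule was defined precisely by exhibiting a reason why $\sigma$ must be eliminated before $\tau$ is exposed; hence the transitive closure of $G'(P)$ is contained in the poset relation. Completeness is the hard direction: I must show that whenever $\rho\prec\pi$ there is a directed path from $\rho$ to $\pi$ in $G'(P)$. The plan is to take a cover relation $\rho\prec\pi$ (no $\tau$ with $\rho\prec\tau\prec\pi$) and argue that the immediate dependence of $\pi$ on $\rho$ is witnessed by a single man or woman whose shortlist change at the moment $\pi$ becomes exposed is exactly the change effected by $\rho$, so that $\rho$ and $\pi$ are joined by a chain of type-1 and type-2 edges; transitivity then recovers every comparability. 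This yields that $G'(P)$ and $G(P)$ have identical reachability and therefore identical closed subsets, which is what the subsequent algorithm requires. Carrying out this last step rigorously demands the careful case analysis of how eliminating an exposed rotation alters the first two entries of the affected shortlists, and this is where the bulk of the combinatorial work lies.
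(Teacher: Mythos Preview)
The paper does not prove this theorem. It is stated as a quoted result from Irving, Leather and Gusfield \cite{ILG} and used as a black box; there is no accompanying proof in the paper to compare your proposal against. Your sketch is a plausible outline of how the original ILG argument proceeds (edge types tied to man/woman shortlist changes, a pair-charging bound on $|E'|$, and a cover-relation argument for completeness of the transitive closure), but within the context of this paper no proof is expected or supplied---only the citation.
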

	
	\begin{corollary}
		The construction of $G'(P)$ can be done in $\mathcal{O}(n^4)$.
	\end{corollary}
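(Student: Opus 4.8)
The plan is to combine the two cost-accounting results already available in this excerpt. The final statement is the Corollary asserting that $G'(P)$ can be constructed in $\mathcal{O}(n^4)$ time, and the natural strategy is to observe that $G'(P)$ is fully specified by three pieces of data: its vertex set $V'$, its edge set $E'$, and the vertex-weights $W'$. The first two are purely combinatorial and are handled by the preceding Theorem (Irving, Leather and Gusfield), whose item~2 gives the unweighted graph $G'(P)$ in $\mathcal{O}(n^3)$ time. What remains is to attach the weights, and since $W'$ draws its values from the same weight function $W(\rho)$ used for $G(P)$, the cost of computing the weights is exactly the cost bounded by the earlier Theorem on $\{W(\rho);\rho\in R\}$, namely $\mathcal{O}(n^4)$.

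Concretely, I would proceed in two steps. First, invoke Theorem~6 (the ILG properties of $G'(P)$) to build the unweighted sparse DAG on its vertex set in $\mathcal{O}(n^3)$ time. Second, invoke Theorem~5, which guarantees that the entire collection $\{W(\rho);\rho\in R\}$ can be computed in $\mathcal{O}(n^4)$ time; since $V'=V=R$, every vertex of $G'(P)$ already receives its weight from this precomputed table, and no additional per-vertex work beyond a table lookup is required. Adding the two running times gives $\mathcal{O}(n^3)+\mathcal{O}(n^4)=\mathcal{O}(n^4)$, which is the claimed bound.

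I would emphasize that the weight assignment is the dominant term, so the proof is essentially a reduction to Theorem~5. The only point requiring a sentence of care is that the weights on $G'(P)$ coincide with those on $G(P)$ for the shared vertices; this is stated in the construction of $G'(P)$ in the text (weights are inherited vertex-by-vertex), and together with $V'=R$ it lets us reuse the precomputed weight table verbatim rather than recomputing anything. With that observation in place, the running-time arithmetic is immediate.

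The main (and only mild) obstacle is bookkeeping rather than mathematics: one must be sure that no hidden cost is incurred in matching the precomputed weights in $\{W(\rho);\rho\in R\}$ to the vertices of $G'(P)$. Because the rotations serve as the vertex labels in both graphs, this is an $\mathcal{O}(1)$ lookup per vertex and $\mathcal{O}(|V'|)=\mathcal{O}(n^2)$ in total, which is absorbed into the $\mathcal{O}(n^4)$ term. Hence there is no genuine difficulty, and the corollary follows by simply summing the two previously established bounds.
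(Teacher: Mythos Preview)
Your proposal is correct and follows essentially the same approach as the paper: invoke Theorem~6(2) to build the unweighted $(V',E')$ in $\mathcal{O}(n^3)$, invoke Theorem~5 to obtain all weights $\{W(\rho):\rho\in R\}$ in $\mathcal{O}(n^4)$, note that $W'$ inherits its values from this table since $V'\subseteq V=R$, and sum. Your extra remark about the $\mathcal{O}(n^2)$ lookup cost is fine but not needed in the paper's version.
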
  
	\begin{proof}
		By statement 2 of Theorem 6, $(V',E')$ can be computed in $\mathcal{O}(n^3)$-time. In addition, since the weights of $G'(P)$ are the same as in $G(P)$ for any edge in $V'\subseteq V$, then by Theorem 5, $W'$ can be computed in $\mathcal{O}(n^4)$-time. Therefore, $G'(P)=(V',E',W')$ can be computed in $\mathcal{O}(n^4)$. 
	\end{proof}

	\begin{corollary}
		In order to solve the optimization phrased by equation (1), it's enough to find a maximal-weight closed subset of $G'(P)$. 
	\end{corollary}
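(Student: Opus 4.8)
The plan is to lean entirely on the first corollary of this section, which already shows that solving the optimization in equation (1) is equivalent to finding a maximal-weight closed subset of $G(P)$. Granting that equivalence, it suffices to prove that every maximal-weight closed subset of $G'(P)$ is also a maximal-weight closed subset of $G(P)$; the corollary then follows immediately.

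First I would record the two structural facts that make $G(P)$ and $G'(P)$ interchangeable for this purpose. By construction $G'(P)$ shares the vertex set of $G(P)$, namely $V' = V = R$, and it assigns to each rotation $\rho$ the same weight $-W(\rho)$. Consequently, for any vertex subset $C$ the total weight $\sum_{\rho\in C}\bigl(-W(\rho)\bigr)$ is the same whether $C$ is regarded as a subset of $G(P)$ or of $G'(P)$. Hence the only way the two maximal-weight closed subset problems could differ is through a discrepancy between their respective families of closed subsets.

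The crux of the argument is therefore to show that a subset $C\subseteq R$ is closed in $G'(P)$ if and only if it is closed in $G(P)$. A subset is closed exactly when, for each of its members, all ancestors under the reachability relation of the graph also lie in the subset (closure under the immediate-parent relation, applied recursively, is precisely closure under ancestors). By statement 3 of Theorem 6 the transitive closure of $G'(P)$ is $G(P)$, and taking the transitive closure never alters reachability: a vertex is reachable from another in $G'(P)$ precisely when it is reachable from it in $G(P)$. Thus every vertex has the same set of ancestors in both graphs, the two notions of closedness coincide, and the families of closed subsets are identical. Combining this with the weight identity above, the closed subsets of $G'(P)$ and of $G(P)$ are the same sets carrying the same weights, so their maximizers coincide, and by the first corollary any such maximizer solves the optimization in equation (1).

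I expect the main obstacle to be making the reachability/closedness step fully rigorous, in particular reconciling the covering-edge definition of $G(P)$ with the assertion that $G(P)$ is the transitive closure of $G'(P)$, and clarifying exactly which relation (immediate parent versus ancestor) underlies the definition of a closed subset. Once it is established that closedness is invariant under transitive closure, the remaining steps are routine.
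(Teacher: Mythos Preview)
Your proposal is correct and follows essentially the same route as the paper: reduce to the earlier corollary identifying the optimization with the maximal-weight closed subset of $G(P)$, then invoke statement~3 of Theorem~6 (that $G(P)$ is the transitive closure of $G'(P)$) to conclude that the closed subsets, and hence their maximizers, coincide. Your write-up is in fact more explicit than the paper's about the weight identity and the reachability argument, and your caveat about the covering-edge definition of $G(P)$ versus the transitive-closure statement is a fair observation about the paper's own conventions rather than a gap in your proof.
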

	
	\begin{proof}
		As it has already been discussed, it is enough to find a maximal-weighted closed subset of $G(P)$ in order to solve the optimization phrased by equation (1). By statement 3 of Theorem 6,  $G(P)$ is the transitive closure of $G'(P)$ and hence recalling the reachability property of the transitive closure of a DAG, $G'(P)$ preserves all of $G(P)$ closed subsets.   Therefore, it is enough to find a maximal-weight closed subset of $G'(P)$. 
		
	\end{proof}
	
	Finally to show that the optimization given by equation (1) can be solved in $\mathcal{O}[n^4log(n)]$-time, recall another claim of Irving et al \cite{ILG} which states that finding the maximal-weight closed subset of $G'(P)$ is equivalent to a specific max-flow problem to be next specified. 
	
	\begin{definition}
		Define the capacitated $[s-t]$ graph $G'_{s,t}(P)$ by the following procedure;
		\begin{enumerate}
			\item connect a source $s$ and a sink $t$ to the graph $G'(P)$.
			\item For each vertex $\rho$ in $G'(P)$ with negative weight, i.e. $W(\rho)<0$ make a directed edge from $s$ to $\rho$ with weight $|W(\rho)|$
			\item For any vertex $\rho$ with positive weight, i.e. $W(\rho)>0$ make a directed edge from  $\rho$ to $t$ with weight $W(\rho)$.
			\item Set the weights of all edges of the original $G'(P)$ to be infinity.
		\end{enumerate}      
		
	\end{definition}
	
	\begin{theorem}[Irving, Leather and Gusfield-1987]
		The number of edges $\mathcal{E}$ and the number of nodes $N$ of $G'_{s,t}(P)$ are both bounded by $\mathcal{O}(n^2)$. 
	\end{theorem}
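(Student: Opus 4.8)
The plan is to obtain both bounds by a direct count, with all of the real work already delivered by statement 1 of Theorem 6, namely $|V'|,|E'|\leq\mathcal{O}(n^2)$. First I would account for the nodes. By construction $G'_{s,t}(P)$ retains every vertex of $G'(P)$ and adjoins exactly the two auxiliary nodes $s$ and $t$, so $N=|V'|+2$. Since $|V'|\leq\mathcal{O}(n^2)$, it follows at once that $N=\mathcal{O}(n^2)$.

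Next I would bound the edges. The edge set of $G'_{s,t}(P)$ is the disjoint union of the original edges $E'$ (the re-weighting to infinity in the last step of the construction leaves their number unchanged) together with the auxiliary edges introduced when attaching $s$ and $t$. The key observation is that each vertex $\rho\in V'$ contributes at most one auxiliary edge: if $W(\rho)<0$ it is incident only to the single edge $s\to\rho$, if $W(\rho)>0$ it is incident only to the single edge $\rho\to t$, and if $W(\rho)=0$ it contributes none. Because the sign conditions $W(\rho)<0$ and $W(\rho)>0$ are mutually exclusive, the number of auxiliary edges is at most $|V'|$.

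Combining the two counts gives $\mathcal{E}=|E'|+(\text{auxiliary edges})\leq|E'|+|V'|\leq\mathcal{O}(n^2)+\mathcal{O}(n^2)=\mathcal{O}(n^2)$, and the stated bound on $\mathcal{E}$ follows. I do not expect any genuine obstacle here: the entire substance of the claim is carried by Theorem 6, and the only point that requires a moment's care is the observation that the sign of $W(\rho)$ assigns each vertex to at most one of the source side or the sink side, so the auxiliary edges are in bijection with a subset of $V'$ rather than with pairs of vertices, and hence cannot inflate the count beyond $\mathcal{O}(n^2)$.
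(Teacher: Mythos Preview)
Your argument is correct. The paper itself does not supply a proof of this theorem: it is stated as a cited result from Irving, Leather and Gusfield~\cite{ILG} and used as a black box in the proof of Theorem~9. Your direct count---$N=|V'|+2$ for the nodes, and $\mathcal{E}\le |E'|+|V'|$ for the edges, since each $\rho\in V'$ acquires at most one auxiliary edge determined by the sign of $W(\rho)$---is exactly the standard justification and relies only on statement~1 of Theorem~6, which is also quoted from~\cite{ILG}. There is nothing to compare against here beyond noting that you have supplied the (easy) verification that the paper omits.
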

	\begin{theorem}[Irving, Leather and Gusfield-1987]
		Let $X$ be the set of edges crossing a minimum $s-t$ cut in $G'_{s,t}(P)$, and denote the weight of $X$ by $W(X)$. The positive nodes in the maximum-weight closed subset of $G'(P)$ are exactly the positive nodes whose edges into $t$ are uncut by $X$. These nodes and all the nodes that reach them in $G'(P)$, define a maximum-weight closed subset in $G'(P)$. 
	\end{theorem}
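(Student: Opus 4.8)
The plan is to establish the classical Picard-style equivalence between maximum-weight closed subsets and minimum $s$-$t$ cuts, specialized to $G'_{s,t}(P)$; the central device is that every original edge of $G'(P)$ carries infinite capacity, so no finite cut may separate its endpoints in the forbidden direction. First I would set up a bijection between closed subsets of $G'(P)$ and finite-capacity $s$-$t$ cuts. Given a cut $(A,B)$ with $s\in A$ and $t\in B$, define $C:=B\setminus\{t\}$. For any original edge $\rho\to\pi$ (present when $\rho\prec\pi$), its infinite capacity together with finiteness of the cut forbids $\rho\in A,\pi\in B$; equivalently $\pi\in C$ forces $\rho\in C$. Since the parents of $\pi$ are exactly the tails of original edges into $\pi$, this says precisely that $C$ contains all parents of each of its members, i.e. $C$ is closed. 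Conversely any closed $C$ yields a finite cut by placing $C\cup\{t\}$ on the sink side, and a finite cut exists (e.g. $C=\emptyset$), so the minimum cut is finite and corresponds to some closed set.

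Second, I would compute the capacity of the cut attached to a closed set $C$. Only edges incident to $s$ or $t$ can be cut: an edge $s\to\rho$ (with $W(\rho)<0$) is cut iff $\rho\in C$, and an edge $\rho\to t$ (with $W(\rho)>0$) is cut iff $\rho\notin C$. Writing $P^{+}:=\sum_{W(\rho)>0}W(\rho)$, this gives
\[
\mathrm{cap}(C)=\sum_{\rho\in C,\,W(\rho)<0}\big(-W(\rho)\big)+\sum_{\rho\notin C,\,W(\rho)>0}W(\rho)=P^{+}-\sum_{\rho\in C}W(\rho).
\]
Because $P^{+}$ is a constant, minimizing the cut capacity is the same as maximizing $\sum_{\rho\in C}W(\rho)$ over closed subsets. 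By the max-flow--min-cut theorem, the minimum cut $X$ therefore realizes a maximum-weight closed subset $C=B\setminus\{t\}$.

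Finally I would extract the explicit description in the statement. A positive node $\rho$ lies in $C$ iff its edge $\rho\to t$ is uncut by $X$, which is the first assertion. For the second, let $C^{*}$ be the set consisting of these positive nodes together with all nodes of $G'(P)$ that reach them. Then $C^{*}$ is closed, since a predecessor of a predecessor is again a predecessor, and $C^{*}\subseteq C$ because $C$ is closed and already contains the relevant positive nodes. Every node of $C\setminus C^{*}$ is nonpositive, so $\sum_{\rho\in C\setminus C^{*}}W(\rho)\le 0$; were this strictly negative, $C^{*}$ would strictly outweigh the maximizer $C$, a contradiction. Hence $W(C^{*})=W(C)$ and $C^{*}$ is itself a maximum-weight closed subset, exactly as claimed.

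I expect the main obstacle to be this last step: one must argue carefully that rebuilding the closed set from only its positive nodes and their ancestors loses no weight, which relies on optimality of $C$ forcing every superfluous negative node to be dropped. The other delicate point is the infinite-capacity bookkeeping in the bijection, where the sign conventions and the direction of the poset edges must be tracked precisely so that "source side upward-closed" translates correctly into "sink side closed."
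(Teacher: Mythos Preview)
The paper does not supply a proof of this theorem; it is quoted verbatim as a result of Irving, Leather and Gusfield (1987) and used as a black box in the complexity analysis. Your proposal is a correct reconstruction of the standard Picard reduction from maximum-weight closure to minimum cut, and all three steps (the bijection via infinite-capacity edges, the capacity identity $\mathrm{cap}(C)=P^{+}-\sum_{\rho\in C}W(\rho)$, and the cleanup showing $C^{*}$ has the same weight as $C$) are sound. One remark: the paper is internally inconsistent about signs, since Definition~8 assigns vertex weight $-W(\rho)$ while Definition~11 and the present theorem implicitly treat the vertex weight as $W(\rho)$; you have tacitly adopted the latter convention, which is the only one under which the construction of $G'_{s,t}(P)$ and the phrase ``positive nodes whose edges into $t$'' are coherent, so that choice is appropriate.
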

	\begin{theorem}
		The optimization which defines the $\nu$-perturbation robust stable matching, can be solved in $\mathcal{O}[n^4log(n)]$-time.
	\end{theorem}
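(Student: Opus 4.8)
The plan is to assemble the pieces already established into a single pipeline whose bottleneck is one max-flow computation on a graph of size $\mathcal{O}(n^2)$. First I would invoke Corollary 1 to replace the optimization in equation (1) with the problem of finding a maximum-weight closed subset of $G(P)$, and then Corollary 3 to pass to the sparse graph $G'(P)$, so that it suffices to find a maximum-weight closed subset of $G'(P)$. By Corollary 2, the weighted graph $G'(P)=(V',E',W')$ can be constructed in $\mathcal{O}(n^4)$-time, a cost I would absorb into the overall budget at the outset (this already includes the $\mathcal{O}(n^4)$ needed to compute the weights $\{W(\rho)\}$ via Theorem 5).

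Next I would build the capacitated graph $G'_{s,t}(P)$ of Definition 10 by adjoining a source $s$ and sink $t$ and orienting the capacity edges according to the sign of each weight. Since this only scans the $\mathcal{O}(n^2)$ vertices and $\mathcal{O}(n^2)$ edges of $G'(P)$, and by Theorem 7 the resulting network has $N,\mathcal{E}=\mathcal{O}(n^2)$, the construction itself is negligible at $\mathcal{O}(n^2)$. By Theorem 8, a maximum-weight closed subset of $G'(P)$ can be read off directly from a minimum $s$-$t$ cut of $G'_{s,t}(P)$: the retained positive nodes are precisely those whose edges into $t$ are uncut, together with all nodes that reach them in $G'(P)$. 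Extracting this set from a given cut is a reachability computation, linear in the network size and hence $\mathcal{O}(n^2)$.

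The heart of the argument is therefore the cost of the minimum-cut computation, which by max-flow/min-cut duality equals that of a maximum $s$-$t$ flow on $G'_{s,t}(P)$. Here I would apply a max-flow routine whose running time on a network with $N$ nodes and $\mathcal{E}$ edges is $\mathcal{O}(N\mathcal{E}\log N)$, as used in \cite{ILG}. Substituting $N,\mathcal{E}=\mathcal{O}(n^2)$ from Theorem 7 gives $\mathcal{O}(n^2\cdot n^2\cdot\log(n^2))=\mathcal{O}(n^4\log n)$, which is exactly the claimed bound.

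Combining the stages, the total running time is $\mathcal{O}(n^4)$ for constructing $G'(P)$ together with its weights, plus $\mathcal{O}(n^4\log n)$ for the flow computation, plus lower-order terms, i.e. $\mathcal{O}(n^4\log n)$ overall. I expect the main obstacle to be not any new combinatorial insight — all structural facts are inherited from \cite{Irving,ILG} and the preceding corollaries — but rather pinning down the max-flow subroutine precisely enough that its complexity on an $\mathcal{O}(n^2)$-sized network yields exactly the $\log n$ factor. A slower generic max-flow routine would inflate the bound, so the result genuinely hinges on this choice of algorithm, and I would make sure to state it explicitly.
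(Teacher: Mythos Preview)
Your proposal is correct and follows essentially the same route as the paper: reduce to the maximum-weight closed subset problem on $G'(P)$ via Corollaries~1--3, build $G'_{s,t}(P)$, invoke Theorems~7--8 to pass to a min-cut/max-flow instance of size $\mathcal{O}(n^2)$, and solve it with an $\mathcal{O}(N\mathcal{E}\log N)$ routine. The only detail the paper makes more explicit is naming the Sleator--Tarjan algorithm \cite{tarjan} as the $\mathcal{O}(N\mathcal{E}\log N)$ max-flow subroutine, which matches your final remark about pinning down the specific algorithm.
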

	
	\begin{proof}
		By Theorem 8, given the graph $G'_{s,t}(P)$, the optimization phrased in equation (1) can be reduced to min-cut max-flow problem defined over  $G'_{s,t}(P)$. By Theorem 7,  $G'_{s,t}(P)$ include no more than $\mathcal{O}(n^2)$ nodes and no more than $\mathcal{O}(n^2)$ edges. Therefore, the max-flow algorithm of Sleator and Tarjan \cite{tarjan} implies that it can be solved in time of $\mathcal{O}[N\mathcal{E}log(N)]$, i.e. its solution can be obtained in $\mathcal{O}[n^4log(n)]$-time. Since Corollary 3 states that $G'(P)$ can be constructed in $\mathcal{O}(n^4)$-time, it is easy to verify that $G'_{s,t}(P)$ can be constructed in the same order of time. Therefore, the concluding result is that the optimization given in equation (1) can be solved in $\mathcal{O}[n^4log(n)]$.

	\end{proof}
	
	\section{Relaxed Perturbation Robust Matching}
	A known and very intuitive observation which was made by Blum et al \cite{Blum} states that given a stable matching of the original SMP instance, the leaving of an agent $\omega_0\in\Omega$ might lead to instability of the new matching. This puts a big question mark upon the stability requirement in the definition of the $\nu$-perturbation robust stable matching. To justify the original definition,  two possible arguments can be made: First, making the original matching stable lets the user to control the number of blocking-pairs in the new matching created after the leaving of $\omega_0$. Second, the user might be willing to ensure that if no one leaves the model, then it will be stable. 
	
	While both of these arguments are reasonable, still they are not definite, i.e. it is sensible to look for the matching which solves the optimization defined by equation (1) over the set of all valid matchings (and not necessary stable). To distinguish this matching from the former which has been discussed so far, call it the \textit{relaxed} $\nu$-perturbation robust matching. In this section it is  shown how the problem of finding it can be transferred into an equivalent assignment problem and hence to be associated with efficient solution. As a first step, recall the formal description of the classical assignment problem \cite{assignment} in the context of the probabilistic model discussed so far.
	
	\begin{definition}[Assignment Problem]
		Given the set of agents $\Omega$  with a cost function $f:\Omega\times \Omega\rightarrow\mathbb{R}_+$, the assignment problem is the maximization which is given by
		
		\begin{equation*}
		max:\sum_{(\omega_1,\omega_2)\in\Omega\times\Omega}-f(\omega_1,\omega_2)\cdot z_{\omega_1,\omega_2}
		\end{equation*}
		\begin{equation*}
		s.t: \sum_{\omega_2\in\Omega} z_{\omega_1,\omega_2}=1\ \ ,\forall \omega_2\in\Omega
		\end{equation*}
		\begin{equation*}
		\ \  \ \  \ \ \sum_{\omega_1\in\Omega}z_{\omega_1,\omega_2}=1\ \ ,\forall \omega_1\in\Omega
		\end{equation*}
		\begin{equation*}
		\ \ \ \ \ \ \ \ \ \  z_{\omega_1,\omega_2}\in\{0,1\}\ \ ,\forall \omega_1,\omega_2\in\Omega
		\end{equation*}
		
	\end{definition}  
	
	\begin{remark}
		Notice that $f(\cdot)$ is a general cost function that, given $\Omega$, determines the assignment problem. Don't confuse it with the cost functions $c(\omega,\mu),\omega\in\Omega$ that return the cost suffered by the agents when the matching $\mu$ is implemented.
	\end{remark}
	Now, the next corollary defines the function $f(\cdot)$ in an appropriate way so the respective assignment problem  becomes equivalent to the generalized $\nu$-perturbation robust optimization. For simplicity of notation, for any $(\omega_1,\omega_2)\in\Omega^2$, denote the cost of agent $\omega_1$ from being matched to $\omega_2$ by $c(\omega_1,\omega_2)$. In addition, observe that for any matching $\mu$ and an agent $\omega\in\Omega$, $\mu(\omega)$ is the agent that is matched to $\omega$ by the matching $\mu$. 
	
	\begin{corollary}
		Note the sex of any agent $\omega\in\Omega$ by $S(\omega)$ and define the function $f(\cdot)$ as follows:
		
		\begin{enumerate}
			\item  If $S(\omega_1)=S(\omega_2)\wedge\omega_1\neq\omega_2$, then $f(\omega_1,\omega_2):=\infty$
			
			\item If $S(\omega_1)\neq S(\omega_2)$, then
			
			\begin{equation*}
			f(\omega_1,\omega_2):=
			\end{equation*}
			\begin{equation*}
			=\nu\Big\{(1-p_{\omega_1}-p_{\omega_2})\big[c^2(\omega_1,\omega_2)+c^2(\omega_2,\omega_1)\big]+p_{\omega_1}\cdot c^2(\omega_2,\omega_2)+p_{\omega_2}\cdot c^2(\omega_1,\omega_1)\Big\}+
			\end{equation*}
			\begin{equation*}
			+(1-\nu)\Big\{(1-p_{\omega_1}-p_{\omega_2})\mathbb{E}_p\big[(c(\omega_1,\omega_2)-c(\omega_1,\mu_{\omega_0}(\omega_1)))^2+(c(\omega_2,\omega_1)-
			\end{equation*}
			\begin{equation*}
			-c(\omega_2,\mu_{\omega_0}(\omega_2)))^2|\omega_0\neq \omega_1,\omega_2\big]+p_{\omega_1}\big[c(\omega_2,\omega_1)-c(\omega_2,\mu_{\omega_1}(\omega_2))\big]^2+
			\end{equation*}
			\begin{equation*}
			+p_{\omega_2}\big[c(\omega_1,\omega_2)-c(\omega_1,\mu_{\omega_2}(\omega_1))\big]^2\Big\}
			\end{equation*}
			
			\item Otherwise,
			\begin{equation*}
			f(\omega,\omega):=(1-p_\omega)\Big[\nu c^2(\omega,\omega)+(1-\nu)\mathbb{E}_p\big[\big(c(\omega,\omega)-c(\omega,\mu_{\omega_0}(\omega))\big)^2\big]\Big]
			\end{equation*} 
			
		\end{enumerate}
		
		If so, the respective assignment problem is equivalent to solving the relaxed $\nu$-perturbation robust matching problem.
	\end{corollary}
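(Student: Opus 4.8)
The plan is to prove the corollary by showing that the relaxed objective $\psi(\mu;\nu)$ decomposes additively over the blocks of the matching $\mu$, and that each block's contribution is exactly the corresponding value of $f$. First I would record the two structural facts that make such a decomposition possible. The cost $c_{\omega_0}(\omega;\mu)$ of any retained agent $\omega\neq\omega_0$ depends on $\mu$ only through the partner $\mu(\omega)$; it equals $c(\omega,\mu(\omega))$ when $\mu(\omega)\neq\omega_0$ and equals the self-cost $c(\omega,\omega)$ when $\omega_0$ was the partner of $\omega$. Moreover the reference matching $\mu_{\omega_0}$, and hence the reference costs $c(\omega,\mu_{\omega_0}(\omega))$, are fixed data determined by $\omega_0$ alone and do not depend on the decision variable $\mu$. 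Consequently, after interchanging the outer expectation over the leaving agent $\omega_0$ with the inner sum over retained agents, every summand of $\psi(\mu;\nu)$ is attributable to a single agent together with that agent's partner under $\mu$.

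Next I would collapse this double sum block by block. Fix a matched pair $\{\omega_1,\omega_2\}$ of $\mu$ and split the scenarios according to whether the leaving agent lies outside the pair (total probability $1-p_{\omega_1}-p_{\omega_2}$), equals $\omega_1$ (probability $p_{\omega_1}$), or equals $\omega_2$ (probability $p_{\omega_2}$). In the first group both agents retain their partner and contribute their nominal squared costs and squared regrets, which is exactly the conditional expectation weighted by $1-p_{\omega_1}-p_{\omega_2}$ that appears in case 2 of the definition of $f$; the remaining two groups account for the scenarios in which one member of the pair departs and produce the $p_{\omega_1}$ and $p_{\omega_2}$ terms. Performing the same split for a self-matched agent $\omega$, where only $\omega_0=\omega$ is special, yields the factor $1-p_\omega$ and the single-agent expression of case 3. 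Collecting the $\nu$-part and the $(1-\nu)$-part separately, the per-block contributions match $f(\omega_1,\omega_2)$ and $f(\omega,\omega)$ term for term, so $\psi(\mu;\nu)=\sum_{\text{blocks of }\mu}f$. The value $f\equiv\infty$ never arises for an actual matching, since no matching pairs two agents of the same sex.

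Finally I would translate this additive form into the assignment problem. A matching $\mu$ corresponds to the $\{0,1\}$-valued array $z$ with $z_{\omega_1,\omega_2}=1$ exactly on the matched and self-matched pairs of $\mu$; such a $z$ satisfies the assignment constraints, and conversely any finite-cost feasible $z$ sends each agent to an opposite-sex partner or to itself, because the value $\infty$ forbids every same-sex off-diagonal assignment. Under this correspondence the assignment cost $\sum_{(\omega_1,\omega_2)}f(\omega_1,\omega_2)\,z_{\omega_1,\omega_2}$ reproduces $\psi(\mu;\nu)$, so minimizing it (equivalently, maximizing its negative as written) returns a relaxed $\nu$-perturbation robust matching.

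I expect the main obstacle to be this last reduction rather than the decomposition itself, and I would isolate three points that need care. The first is in the verification above: the residual cost charged to the partner of a departed agent must be pinned down consistently across the $\nu$-part and the $(1-\nu)$-part, since that is the only place where the identity of $f$ is not purely mechanical. The second is that the feasible set of the assignment problem is all permutations of $\Omega$, which a priori contains finite-cost solutions that are not matchings, for example an even alternating cycle through men and women all of whose arcs are opposite-sex; I would close this gap by using that $f$ is symmetric, passing to the transpose of an optimal permutation and symmetrizing, or equivalently reformulating on the bipartite men--women incidence with self-loops, so that an optimal solution may be taken to be an involution, i.e.\ a genuine matching. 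The third is bookkeeping: on $\Omega\times\Omega$ an involution records each unordered matched pair through the two entries $z_{\omega_1,\omega_2}=z_{\omega_2,\omega_1}=1$ while recording a self-match only once, so the objective must be set up (for instance by halving the off-diagonal entries of $f$ or by working on the bipartite incidence) so that every pair and every self-match is counted once and the assignment objective has precisely the same minimizers as $\psi$. Handling these points is what upgrades the equivalence from a direct substitution to a theorem.
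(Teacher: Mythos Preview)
Your proposal is correct and follows the same route as the paper's proof, which merely observes that the $\infty$ values rule out same-sex pairings (since matching every agent to herself gives a finite feasible value) and then asserts that ``simple conditional expectation rules'' recover $f$ from $\psi$. You are in fact considerably more careful than the paper: the permutation-versus-involution and double-counting issues you isolate in your last paragraph are genuine subtleties that the paper's two-sentence argument passes over in silence.
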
 
	\begin{proof}
		To begin with, notice that the first clause in the definition of $f(\cdot)$ guarantees that no two agents from the same sex are paired by the algorithm. To explain this, pair each agent to herself and have finite value for the goal function. Then, with respect to the expression of $f(\cdot)$ and the definition of $\psi(\cdot;\nu)$, it is possible to use simple conditional expectation rules and the definition to get the needed result.
		
	\end{proof}
	
	\begin{theorem}[Edmonds, Karp and Hopcroft]
		The assignment problem can be solved in $\mathcal{O}(n^3)$-time.
	\end{theorem}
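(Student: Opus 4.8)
The plan is to treat Theorem~12 as a classical result and to recall the standard primal--dual (Hungarian) algorithm rather than re-derive it from scratch. First I would observe that the optimization in the assignment-problem definition is exactly the linear assignment problem: since its first clause sets $f(\omega_1,\omega_2)=\infty$ whenever $\omega_1$ and $\omega_2$ share a sex and are distinct, the only finite-cost feasible assignments are perfect matchings of the complete bipartite graph whose two sides are the men and the women (with self-assignments representing unmatched agents). Solving the program is therefore equivalent to finding a minimum-cost perfect matching in a complete bipartite graph on $\mathcal{O}(n)$ vertices per side, with edge costs read off from $f(\cdot)$.

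Next I would recall the integrality of the linear-programming relaxation. Dropping the constraint $z_{\omega_1,\omega_2}\in\{0,1\}$ to $z_{\omega_1,\omega_2}\geq 0$ yields a polytope whose vertices are permutation matrices, because the underlying constraint matrix is totally unimodular (equivalently, by the Birkhoff--von Neumann theorem). Consequently an optimal fractional solution may always be taken integral, which licenses the use of a combinatorial matching algorithm to solve the original $0/1$ program exactly.

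The core step is the primal--dual method itself. I would maintain dual potentials $u(\cdot)$ and $v(\cdot)$ on the two sides subject to $u(\omega_1)+v(\omega_2)\leq f(\omega_1,\omega_2)$, and grow the matching one edge at a time along shortest augmenting paths in the reduced-cost graph. There are exactly $\mathcal{O}(n)$ augmentation phases, one per matched pair. The decisive efficiency gain is that each phase can be executed in $\mathcal{O}(n^2)$ time: the reduced costs $f(\omega_1,\omega_2)-u(\omega_1)-v(\omega_2)$ remain nonnegative after each potential update, so a single array-based Dijkstra computation (the variant appropriate for a dense graph) finds the shortest augmenting path in $\mathcal{O}(n^2)$, after which the potentials and the matching are updated in $\mathcal{O}(n)$. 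Multiplying the $\mathcal{O}(n)$ phases by the $\mathcal{O}(n^2)$ per-phase cost yields the claimed $\mathcal{O}(n^3)$ bound.

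I expect the main obstacle to be precisely this last point: verifying that the potential update preserves nonnegativity of the reduced costs, so that Dijkstra rather than a slower Bellman--Ford-style relaxation applies, and thereby collapsing the naive $\mathcal{O}(n^4)$ estimate to $\mathcal{O}(n^3)$. Since this invariant-maintenance analysis is exactly the content of the cited work of Edmonds, Karp and Hopcroft, I would finish by invoking their result rather than reproducing the bookkeeping in full.
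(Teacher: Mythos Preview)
The paper does not prove this statement at all: Theorem~10 is simply stated as a classical result attributed to Edmonds, Karp and Hopcroft, with no argument given, and is then invoked as a black box in the proof of Theorem~11. Your proposal is therefore strictly more detailed than the paper's treatment.

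That said, your sketch is a correct account of why the $\mathcal{O}(n^3)$ bound holds: the reduction to minimum-cost perfect matching in a complete bipartite graph, LP integrality via Birkhoff--von~Neumann, and the primal--dual scheme with $\mathcal{O}(n)$ augmentations each costing $\mathcal{O}(n^2)$ via array-based Dijkstra on nonnegative reduced costs. The only caveat is that some of this framing (the clause setting $f(\omega_1,\omega_2)=\infty$ for same-sex pairs) belongs to Corollary~4 rather than to the abstract assignment problem of Definition~12; the theorem as stated is about the generic assignment problem on $\Omega\times\Omega$, so the bipartite structure is not actually needed for the citation---the Hungarian method applies directly to the square cost matrix on $|\Omega|$ agents. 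This is a minor point of presentation, not a gap.
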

	
	\begin{theorem}
		The relaxed $\nu$-perturbation robust matching can be found in $\mathcal{O}(n^4)$-time
	\end{theorem}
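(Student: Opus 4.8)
The plan is to combine the reduction supplied by Corollary 4 with the two complexity results already in hand, Theorem 4 and Theorem 10, and then to account carefully for the cost of actually assembling the cost matrix $f$. By Corollary 4, once the function $f(\cdot)$ has been specified on every ordered pair of agents, finding the relaxed $\nu$-perturbation robust matching is exactly the classical assignment problem of Definition 11. Hence the total running time decomposes into the time to build the matrix $(f(\omega_1,\omega_2))_{\omega_1,\omega_2\in\Omega}$ plus the time to solve the resulting assignment instance, and I would bound each of these two pieces separately.

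For the second piece I would invoke Theorem 10 directly: since $\Omega$ consists of $n$ men and $n$ women, $|\Omega|=2n=\mathcal{O}(n)$, so the assignment problem over $\Omega\times\Omega$ is solved in $\mathcal{O}(n^3)$-time. The remaining and genuinely new work is therefore to show that the matrix $f$ can be assembled within $\mathcal{O}(n^4)$-time. First I would precompute the whole family $\{\mu_{\omega_0}\}_{\omega_0\in\Omega\cup\phi}$; by Theorem 4 this costs $\mathcal{O}(n^4)$-time, and this is the step I expect to dominate the entire analysis. With these matchings stored, every quantity of the form $c(\omega,\mu_{\omega_0}(\omega))$ appearing in the definition of $f$ — including the terms involving $\mu_{\omega_1}$ and $\mu_{\omega_2}$ — becomes available in constant time by a table lookup.

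It then remains to bound the per-entry cost. There are $|\Omega|^2=\mathcal{O}(n^2)$ ordered pairs, and for a fixed pair $(\omega_1,\omega_2)$ the only nontrivial ingredient is the conditional expectation $\mathbb{E}_p[\,\cdot\mid\omega_0\neq\omega_1,\omega_2]$, which unfolds into a sum of the form $\sum_{\omega_0}p_{\omega_0}(\cdots)$ over the $\mathcal{O}(n)$ possible leaving agents $\omega_0\in\Omega\cup\phi$, each summand being a constant-time expression in the already-computed costs and matchings. Thus each entry of $f$ is obtained in $\mathcal{O}(n)$-time and the full matrix in $\mathcal{O}(n^2)\cdot\mathcal{O}(n)=\mathcal{O}(n^3)$-time. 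Adding the three contributions — the precomputation of the matchings at $\mathcal{O}(n^4)$, the assembly of $f$ at $\mathcal{O}(n^3)$, and the assignment solver at $\mathcal{O}(n^3)$ — yields an overall bound of $\mathcal{O}(n^4)$-time, as claimed.

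The step I expect to be the main obstacle is not the asymptotic bottleneck — that is plainly the $\mathcal{O}(n^4)$ precomputation of $\{\mu_{\omega_0}\}$ inherited from Theorem 4 — but rather the bookkeeping needed to verify that each entry $f(\omega_1,\omega_2)$ truly reduces to a single $\mathcal{O}(n)$ sum. In particular, I would have to check that all of the matching-dependent terms $\mu_{\omega_0}(\omega_1)$, $\mu_{\omega_1}(\omega_2)$, $\mu_{\omega_2}(\omega_1)$, and their self-matched counterparts are retrievable from the precomputed family without triggering any fresh stable-matching computation, so that no hidden factor of $n$ creeps into the per-entry work and inflates the assembly cost past $\mathcal{O}(n^3)$.
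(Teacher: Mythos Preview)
Your proposal is correct and follows essentially the same approach as the paper: precompute the family $\{\mu_{\omega_0}\}$ in $\mathcal{O}(n^4)$ via Theorem 4, observe that each entry $f(\omega_1,\omega_2)$ is either $\infty$ or reduces to an $\mathcal{O}(n)$ sum over $\omega_0$ once the matchings are stored, assemble the $\mathcal{O}(n^2)$ entries, and then apply the $\mathcal{O}(n^3)$ assignment solver of Theorem 10. Your accounting is in fact slightly tighter than the paper's --- you correctly note the matrix assembly is $\mathcal{O}(n^3)$, whereas the paper states $\mathcal{O}(n^4)$ for that step (which is still sufficient but looser); the overall bound is the same since the precomputation dominates.
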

	
	\begin{proof}
		By Theorem 5, the set $\{\mu_\phi\}\cup\{\mu_{\omega_0};\omega_0\ \ s.t. \ \ p_{\omega_0}>0\}$ can be computed in $\mathcal{O}(n^4)$-time. 
		Notice that given this set of matchings for any $(\omega_1,\omega_2)\in\Omega\times\Omega$, $f(\omega_1,\omega_2)$ equals to either $\infty$ or to another expression which involves conditional expectation which requires no more than $\mathcal{O}(n)$ arithmetic operations. In addition, since $|\Omega\times\Omega|=\mathcal{O}(n^2)$, then the result is that $f(\cdot)$ can be calculated in $\mathcal{O}(n^4)$-time. Finally, by Theorem 10, the assignment problem which is associated with $f(\cdot)$ can be solved in $\mathcal{O}(n^3)$-time and hence the needed result is done.
		
	\end{proof}

	\section{Conclusion}
	In this work a new notion of optimality for matchings is phrased. This new concept is based on the natural aspiration that given prior knowledge about the possibility of agents to leave, the matching maker would be willing to make some benefit from this knowledge. Here there are some cautions and points that can be extended by further research, regarding our research. 
	
	First of all, one should be aware to the fact that the running-times were calculated with respect to the simplification which states that no more than one agent can leave the model. In real world situations, less restrictive models can be preferred and hence the running-times should be greater respectively. Notice that the current approach leads to polynomial algorithm only if the number of subsets (not necessarily foreigns) of agents that can quit the model all together is polynomial as well. Therefore a well motivated challenge which stands in front of the statistical community is to look for a practical way to select and estimate such a model in a way that allow fair calibration.  
	
	Another aspect of the current results is providing more strength to the work of Irving \textit{et al} \cite{ILG}. To see this, notice that the current work presents a new well-motivated example that can be solved in the spirit of their algorithmic approach. Keeping with this guideline, it can be interesting to replace the Euclidean norm which appears in the definition of the $\nu$-perturbation robust stable matching by norms like $L_1$ or $L_\infty$ that are not smooth. Therefore an open question is to check whether these norms define a $\nu$-perturbation robust stable matching that can be solved efficiently. 
	
	Another direction for further research is to evaluate the price in terms of the social cost of the stability requirement in the original definition of the $\nu$-perturbation robust stable matching. Regarding this price, it seems reasonable to discuss the prices of anarchy and stability \cite{Nissan} with respect to the model specifications.
	
	To continue with, the market-design community is encouraged to generalize this model in the direction of endogenous leavings of agents, i.e agents who leave the model with probabilities that are non decreasing in their costs. Finally, notice that as mentioned by \cite{Mc}, the results of Roth et al \cite{Roth} enable to generalize the results of the current work to the case of SMP instances that include different numbers of men and women.
	\\
	\\
	\textbf{Acknowledgment:} I'd like to thank Assaf Romm, Or Zuk and Elisheva Schwarz for their comments.

	\end{document}